\DeclareMathOperator*{\argmin}{arg\,min}
\newcommand{\norm}[1]{\left\| #1 \right\|}%
\theoremstyle{definition}
\newtheorem{definition}{Definition}
\newtheorem{proposition}{Proposition}
\newtheorem{assumption}{Assumption}
\newtheorem{remark}{Remark}
\begin{document}

\title{Distributed Learning of Neural Lyapunov Functions for Large-Scale Networked Dissipative Systems}


\author{Amit Jena, Tong Huang, S. Sivaranjani, Dileep Kalathil, and Le Xie
    	\thanks{Amit Jena, Dileep Kalathil and Le Xie are with the Department of Electrical and Computer Engineering at Texas A\& M University, College Station, TX, USA. Email:{\tt \{amit$\_$jena, dileep.kalathil, le.xie\}@tamu.edu }.
    	\par Tong Huang is with the MIT Laboratory of Information \& Decision Systems, Cambridge, MA, USA. Email:{\tt tongh@mit.edu}. 
    	\par  S. Sivaranjani is with the School of Industrial Engineering at Purdue University, West Lafayette, IN, USA. Email:{\tt sseetha@purdue.edu}. 
    	}
}

\maketitle

\begin{abstract}
This paper considers the problem of characterizing the stability region of a large-scale networked system comprised of dissipative nonlinear subsystems, in a distributed and computationally tractable way.  One standard approach to estimate the stability region of a general nonlinear system is to first find a Lyapunov function for the system and characterize its region of attraction as the stability region. However, classical approaches, such as sum-of-squares methods and quadratic approximation, for finding a Lyapunov function either do not scale to large systems or give very conservative estimates for the stability region. In this context, we propose a new distributed learning based approach by exploiting the  dissipativity structure of the subsystems.  Our approach has two parts: the first part is a distributed approach to learn the storage functions (similar to the Lyapunov functions) for all the subsystems, and the second part is a distributed optimization approach to find the Lyapunov function for the networked system using the learned storage functions of the subsystems. We demonstrate the superior performance of our proposed  approach  through extensive case studies in microgrid networks.
\end{abstract}

\begin{IEEEkeywords}
Lyapunov function, stability assessment, distributed learning, dissipativity 
\end{IEEEkeywords}

\section{Introduction}

Large-scale networked systems formed by interconnection of a several nonlinear subsystems are encountered in several practical applications such as infrastructure networks including power networks, transportation networks, and communications networks. For example, in power networks, a collection of renewable generators, storage, and loads in a small area can be modeled as a single microgrid with nonlinear dynamics, and a number of such microgrids form a  network which balances the supply and demand of electric power in a larger geographical area. Analyzing the stability margins of the dynamics of such infrastructure systems is crucial in ensuring their safe and reliable operation. One standard approach to estimate the stability region of a general nonlinear system is to first find a Lyapunov function for the system and characterize its region of attraction as the stability region. The sum-of-squares approach is one popular method for finding a Lyapunov function for a dynamical system \cite{parrilo2000structured, henrion2005positive,jarvis2003some,topcu2009robust, topcu2008local}. However, sum-of-squares approaches typically do not scale well to large systems, since a large number of semidefinite programs need to be solved for the sum-of-squares decomposition of polynomial systems even with a few states \cite{parrilo2000structured}.  Another approach is to employ local linearizations and use quadratic  approximations to find Lyapunov functions. However, this approach is typically conservative in the sense that stability can only be certified in a small vicinity of an equilibrium point of a nonlinear system, which may be insufficient to cover the normal range of operation in several practical applications such as microgrids in power systems \cite{huang2021neural}.

Due to the large size and continually expanding scale of many real-world networked systems, it is especially important to develop distributed approaches to assess the stability of such systems. Analyzing the stability of a general networked system in a distributed manner from the stability assessment of its subsystems is a complex open problem.  Such a distributed stability assessment, however, is possible for a class of networked systems where each subsystem satisfies passivity or dissipativity properties  \cite{vidyasagar1979new,moylan1979tests, arcak2016networks, agarwal2020distributed}. However, these results are typically confined to local stability assessments based on linearized system dynamics. Thus,  \textit{how to find a  Lyapunov function (and the corresponding stability region) for a large scale networked system in a distributed and computationally tractable way} remains an important open problem.

In this paper, we consider the problem of finding a  Lyapunov function and estimating its associated region of attraction in a distributed manner for a large scale networked system comprised of a number of subsystems. We assume that each subsystem is nonlinear and satisfies the dissipativity property \cite{arcak2016networks}.  Our approach has two stages: in the first stage, we develop a distributed learning approach to learn the storage functions (similar to the Lyapunov functions) for all the subsystems; in the second stage,  we develop a distributed optimization approach to find the Lyapunov function for the networked system using the learned storage functions of the subsystems. Our main algorithm essentially is an iteration over these two parts until some convergence criterion is satisfied.

Specifically, in the first stage,  we use a neural network approximation to learn a storage function for each subsystem such that the subsystem satisfies the dissipativity property. Our approach is inspired from the recent works which use the data-based  function approximation capability of neural networks to \textit{learn} a Lyapunov function for nonlinear systems  \cite{kolter2019learning, chang2019neural, huang2021neural,huang2021HICSS} in a centralized way. In addition to this empirical learning, we also use a satisfiability modulo theories (SMT) solver based falsifier that finds counterexamples to verify this local dissipativity property using the storage function; when no counterexample is found by the falsifier, the subsystem  is provably dissipative \cite{chang2019neural}. In the second stage, we use the fact that the stability of the networked system can be guaranteed by verifying some conditions on the  dissipativity of the subsystems and the coupling between dissipative subsystems \cite{arcak2016networks}. For efficient implementation, we formulate and solve this as a distributed optimization problem using an alternating direction method of multipliers (ADMM) procedure which iteratively updates the storage function for each subsystem until the network-level stability conditions are satisfied. This step also allows us to compute the Lyapunov function for the networked system and estimate its region of attraction. Additionally, we learn local controllers at the subsystem-level that enhance the stability region of the networked system.

We demonstrate the performance of our proposed distributed learning approach for stability assessment through extensive case studies in microgrid networks. Our choice of microgrid networks is motivated by their importance in real-world power systems  and their natural structure as a network of smaller nonlinear subsystems \cite{olivares2014trends, katiraei2008microgrids, zamora2016multi}. In recent years, the stability assessment problem for a network of microgrids has also attracted a lot of attention from the power systems and controls community \cite{shamsi2014stability,he2019small,song2017distributed}.

A preliminary version of this work was published as a conference paper  \cite{jena2021distributed}. In this work, we significantly expand and extend the results in \cite{jena2021distributed}. In particular, we learn a local controller for each subsystem that improves the stability of the network system and makes it robust to bounded disturbances. We also consider a larger network based case study which clearly illustrates the superior performance our distributed learning approach both in terms of finding a larger stability region and in reduced training/computation time.  In summary, the key contributions of this paper are as follows:
\begin{enumerate}
    \item We develop a distributed approach to learn the Lyapunov function and the stability region of a large scale networked system  by exploiting the dissipativity property of its subsystems and leveraging recent advances in neural network based nonlinear function approximation/learning using data.
    \item We learn a local controller for each subsystem that further improves the stability of the entire networked system.
\end{enumerate}

This paper is organized as follows.  Section \ref{sec:formulation} gives the preliminaries and basic problem formulation,  Section \ref{sec:algorithm} describes our distributed learning based algorithm,  Section \ref{sec:case_studies} presents the case studies,  and Section \ref{sec:conclusion} concludes with future research directions.

\section{Problem Formulation}
\label{sec:formulation}

We consider a large scale  nonlinear networked system formed by the interconnection of  $n$ nonlinear  subsystems. The dynamics of the $i$th subsystem is given by 
\begin{align}
\label{eq:subsystem}
        \dot{x}_i = f_i(x_i, u_i, v_i), \quad
        y_i = h_i(x_i,u_i, v_i),
\end{align}
where $x_i \in \mathbb{R}^{d^{x}_{i}}$, $u_i \in \mathbb{R}^{d^{u}_{i}}$, $v_i \in \mathbb{R}^{d^{v}_{i}}$, and $y_i\in \mathbb{R}^{d^{y}_{i}}$ are the state, primary  input,  secondary (local) input, and output of the subsystem $i$, respectively. The system dynamics and system output are specified by the continuously differentiable functions ${f}_i$ and ${h}_i$. The  state, primary input, secondary input and output of the networked system are then defined as $x=[x_1^{\top}, \ldots, x_n^{\top}]^{\top}$, $u=[u_1^{\top}, u_2^{\top}, \ldots, u_n^{\top}]^{\top}$,  $v=[v_1^{\top}, v_2^{\top}, \ldots, v_n^{\top}]^{\top}$, and $y=[y_1^{\top}, y_2^{\top}, \ldots, y_n^{\top}]^{\top}$ respectively. The overall dynamics of the networked system is specified as 
\begin{align}
    \label{eq:dynamic_whole_1}
    \dot{x} = f(x, u, v), \quad 
    y = h(x, u, v),
\end{align}
where $f=[f_1^{\top}, \ldots, f_n^{\top}]^{\top}$, and $h=[h_1^{\top}, \ldots, h_n^{\top}]^{\top}$. 

We assume that the subsystems are coupled through a primary and secondary input as
\begin{equation} 
\label{eq:couple_eq}
    u = g^{\textrm{pri}}(y), \quad  v_{i} = g^{\textrm{sec}}_{i}(y_{i}),
\end{equation}
where $g^{\textrm{pri}}$ and $g^{\textrm{sec}}_{i}$ specify the  primary  control law and secondary (local) control law, respectively. We adopt this two-level control structure for the following reason. Many large scale real-world networked systems, such as power systems, use  a hierarchical control architecture with a dedicated  secondary controller at each subsystem that maps its local output to a local control input \cite{sivaranjani2020distributed, okou2005power}. Such secondary controllers are known to critically improve the stability of the networked system using only local information \cite{okou2005power}. Our separate modeling on the primary ($ u_{i}$)  and secondary ($ v_{i}$) control inputs is motivated from such real-world systems. Later, we will also show experimental results which demonstrate the role of the secondary controller in improving the stability region of the networked system.

Without loss of generality, assume that the equilibrium point of \eqref{eq:dynamic_whole_1}-\eqref{eq:couple_eq} is given by $(x^{*}, u^{*}, v^{*}, y^{*}) = o$, where $o$ is the origin. Also,  denote  $(x^{*}_{i}, u^{*}_{i}, v^{*}_{i}, y^{*}_{i})$ as $o_{i}$. For simplicity, we further restrict our consideration to a linear approximation of the coupling \eqref{eq:couple_eq} between subsystems, given by
\begin{equation}
\label{eq:linear-couple_eq}
    u = M y, \quad v_{i} = K_{i} y_{i},
\end{equation}
where $M$ and $K_{i}$ are the Jacobian matrices of $g^{\textrm{pri}}$ and $g^{\textrm{sec}}_{i}$ evaluated at $o$ and $o_{i}$, respectively. The overall dynamics of the networked system constituted by the interconnection of these $n$ subsystems is now completely specified by \eqref{eq:dynamic_whole_1} and \eqref{eq:linear-couple_eq} for all $(x, u, v, y) \in B_{o}$, where $B_{o}$ is a neighborhood around the origin $o$.

We now define the \textit{dissipativity}  property for the subsystems.
\begin{definition}[Dissipativity \cite{arcak2016networks}]
\label{defn-dissiativity}
    The subsystem $i$ described by  \eqref{eq:subsystem} is said to be \textit{dissipative} with respect to the \textit{supply rate function} $r_i(u_i, v_i,  y_i)$ if there exists  a continuously differentiable \textit{storage function} $\tilde{V}_i:\mathbb{R}^{d^{x}_{i}}\rightarrow \mathbb{R}$ such that  $\tilde{V}_i(o_{i}) = 0$, $\tilde{V}_i(x_i) \geq 0$, and 
    \begin{align}
    \label{eq:defn-dissiativity}
         \dot{\tilde{V}}_i(x_i) = \nabla \tilde{V}_i(x_i)^\top f_i(x_i, u_i, v_i) \leq r_{i}(u_i, v_i, y_i),
    \end{align}
for all $(x_i, u_i, v_i, y_i) \in  B_{o_i}$. 
\end{definition}

In this work, we assume that each subsystem satisfies the {dissipativity} property with respect to a quadratic supply rate function, as stated below. 
\begin{assumption}[Dissipative subsystem]
\label{assum:dissiativity}
Every subsystem $i\in\{1,2,\ldots,n\}$, described by \eqref{eq:subsystem}, is \textit{{dissipative}} with respect to the quadratic supply rate function given by
\begin{align}
\label{eq:supply-rate-ri}
    r_i(u_i, v_i, y_i) = \left[\begin{array}{c}
u_{i} \\
y_{i} \\
v_{i} \\
y_{i}
\end{array}\right]^{\top}
\left[\begin{array}{cc}
R_{i} & 0 \\
0 & R_i
\end{array}\right]
\left[\begin{array}{c}
u_{i}\\
y_{i}\\
v_{i} \\
y_{i}
\end{array}\right],
\end{align}
where $R_{i}=\left[\begin{array}{ll}R_{i}^{11} & R_{i}^{12} \\ R_{i}^{21} & R_{i}^{22}\end{array}\right]$. 
\end{assumption}

Next, we  state the  definitions of Lyapunov function and region of attraction (RoA). Note that we will  use the terms RoA and stability region interchangeably  in the following.
\begin{definition}[Lyapunov function and region of attraction \cite{khalil2002nonlinear}]
\label{def:lyapunov}
 For the system \eqref{eq:dynamic_whole_1}, a continuous differentiable scalar function $V: \mathbb{R}^{d^{x}} \rightarrow \mathbb{R}, d^{x} = \sum^{n}_{i=1} d^{x_{i}}$,  is a \textit{strict Lyapunov function} valid in a region $\mathcal{V}_{\bar{c}}:=\{x:\norm{x}_2\le \bar{c}\}$ if the following conditions are satisfied: \\
 $(i)$ $V(o) = 0$, $(ii)$  $V(x)>0$ for all $x\in \mathcal{V}_{\bar{c}} \backslash o$, and $(iii)$  $\dot{V} = \nabla V(x) \dot{x}<0$, for all $x\in \mathcal{V}_{\bar{c}} \backslash o$. \\ Further, the region $\mathcal{V}_{\bar{c}}$ is defined as the  \textit{region of attraction} (RoA) of the equilibrium point $o$, that is, $ x(0)\in \mathcal{V}_{\bar{c}} \implies \lim_{t \rightarrow \infty} x(t)=o.$
\end{definition}

The stability of the networked system  described by \eqref{eq:dynamic_whole_1} and \eqref{eq:linear-couple_eq} can be assessed by identifying a Lyapunov function  satisfying Definition \ref{def:lyapunov} and by characterizing the RoA.  In general, sum-of-squares approaches \cite{parrilo2000structured,henrion2005positive,jarvis2003some,topcu2009robust,topcu2008local} can be used to assess the stability of  nonlinear systems. Alternatively, local linearizations of  the dynamics  may be used to compute a quadratic Lyapunov function \cite{chiang1989study}. However, these approaches often fail to find a meaningful Lyapunov function and RoA for large-scale, high dimensional,  networked systems because of the following challenges. Firstly, sum-of-squares approaches do not scale well computationally and will quickly become intractable when the system dimension increases \cite{parrilo2000structured,henrion2005positive,jarvis2003some,topcu2009robust,topcu2008local}. Secondly, sum-of-squares approaches and quadratic approximation based approaches are typically very conservative in their estimation of RoA even for small systems \cite{chang2019neural, huang2021neural}. This may lead to the design of conservative controllers and sub-optimal system operation. 

In order to overcome these  challenges, some recent works have exploited the data-based  function approximation capability of neural network to \textit{learn} a Lyapunov function for nonlinear systems  \cite{kolter2019learning, chang2019neural, huang2021neural}, with  impressive empirical performance in estimating the stability region for nontrivial nonlinear systems. However, the training time, number of samples needed, and computational complexity of these approaches  also increase exponentially in the system dimension, which can be prohibitively high for  a networked system constituted by a large number of smaller subsystems. In this work, we propose a distributed learning approach that exploits the dissipativity property of the subsystems to overcome these challenges and efficiently learn the Lyapunov function and stability region of the large-scale networked system.

\section{Distributed Learning of Neural Lyapunov Function for the Networked System}\label{sec:nlf}
\label{sec:algorithm}

In this section, we  propose a two-stage distributed approach to  learn a Lyapunov function for the networked system and characterize its RoA.

\subsection{Lyapunov Function for the Networked System from the Storage Functions of the Subsystems}
\label{sec:subsec-decomposition-result}
Our key idea is to reduce the  dimension of the problem by  leveraging the fact that the Lyapunov function for the networked system can be constructed from the  storage functions of the subsystems   under the  dissipativity assumption (Assumption \ref{assum:dissiativity}) \cite{arcak2016networks}. We  formally state this result below.
\begin{proposition}
\label{prop:diss_to_stab}
Consider the networked system  specified by \eqref{eq:dynamic_whole_1} and \eqref{eq:linear-couple_eq}. Let Assumption \ref{assum:dissiativity} hold, and $r_{i}$  as specified in \eqref{eq:supply-rate-ri}   be the supply rate function and $\tilde{V}_i$ be the corresponding storage function of subsystem $i$. If 
\begin{eqnarray}
\label{eq:global_diss_cond}
&\left[\begin{array}{c}
{M} \\
{I} \\
{K} \\
{I}
\end{array}\right]^{\top}
\left[\begin{array}{cc}
R & 0 \\
0& R
\end{array}\right]
\left[\begin{array}{c}
{M}\\
{I}\\
{K} \\
{I}
\end{array}\right] \preceq 0, 
\end{eqnarray}
where 
\begin{align*}
  R &=\nonumber\left[\begin{array}{cccccc} R_{1}^{11} & & &  R_{1}^{12} & & \\ & \ddots & & & \ddots & \\ & &  R_{N}^{11} & &  &  R_{N}^{12} \\  R_{1}^{21} & & &  R_{1}^{22} & & \\ & \ddots & &  & \ddots & \\ & &  R_{N}^{21} & &  &  R_{N}^{22}\end{array}\right],
\end{align*}
$K = \text{diag}([K_1, \hdots, K_n])$ is a block diagonal matrix, and $I$ is an identity matrix of appropriate order, then,  \\
$(i)$ the networked system has a stable equilibrium at $o$, and\\
$(ii)$ $V(x)=\sum\limits_{i=1}^n \tilde{V}_i(x_i)$ is a  Lyapunov function for the networked system.
\end{proposition}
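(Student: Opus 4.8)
The plan is to verify directly that the candidate function $V(x)=\sum_{i=1}^{n}\tilde{V}_i(x_i)$ satisfies the Lyapunov conditions of Definition~\ref{def:lyapunov} on a neighborhood of $o$, using only the per-subsystem dissipation inequality \eqref{eq:defn-dissiativity} granted by Assumption~\ref{assum:dissiativity} and the interconnection law \eqref{eq:linear-couple_eq}. The positivity requirements are immediate from the storage-function properties: since $\tilde{V}_i(o_i)=0$ and $\tilde{V}_i(x_i)\geq 0$ for each $i$, summation gives $V(o)=0$ and $V(x)\geq 0$. Thus claim~(ii) reduces entirely to establishing the derivative condition $\dot{V}\leq 0$, and claim~(i) will then follow from the standard Lyapunov stability theorem.

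First I would differentiate $V$ along the trajectories of \eqref{eq:dynamic_whole_1} and apply \eqref{eq:defn-dissiativity} subsystem by subsystem, yielding $\dot{V}(x)=\sum_{i=1}^{n}\nabla\tilde{V}_i(x_i)^{\top}f_i(x_i,u_i,v_i)\leq\sum_{i=1}^{n}r_i(u_i,v_i,y_i)$. The next step is to recognize the sum of local supply rates as a single global quadratic form. Writing $r_i$ from \eqref{eq:supply-rate-ri} as the sum of a $(u_i,y_i)$-term and a $(v_i,y_i)$-term, each weighted by $R_i$, one checks against the block layout of $R$ that scattering the blocks $R_i^{11},R_i^{12},R_i^{21},R_i^{22}$ across the four quadrants exactly reproduces $\sum_i [u_i^{\top}\ y_i^{\top}]\,R_i\,[u_i^{\top}\ y_i^{\top}]^{\top}=[u^{\top}\ y^{\top}]\,R\,[u^{\top}\ y^{\top}]^{\top}$, with the same identity for the $(v,y)$ contribution. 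Stacking the two pieces gives $\sum_i r_i=[u^{\top}\ y^{\top}\ v^{\top}\ y^{\top}]\,\mathrm{diag}(R,R)\,[u^{\top}\ y^{\top}\ v^{\top}\ y^{\top}]^{\top}$.

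Finally I would substitute the linearized coupling $u=My$ and $v=Ky$, where $K=\mathrm{diag}(K_1,\dots,K_n)$ realizes $v_i=K_iy_i$. This factors the stacked input--output vector as $\bigl[\begin{smallmatrix} M\\ I\\ K\\ I\end{smallmatrix}\bigr] y$, so that $\sum_i r_i = y^{\top}\bigl(\bigl[\begin{smallmatrix} M\\ I\\ K\\ I\end{smallmatrix}\bigr]^{\top}\mathrm{diag}(R,R)\bigl[\begin{smallmatrix} M\\ I\\ K\\ I\end{smallmatrix}\bigr]\bigr)y$. Condition \eqref{eq:global_diss_cond} asserts precisely that this middle matrix is negative semidefinite, hence $\sum_i r_i\leq 0$ and therefore $\dot{V}\leq 0$ on $B_o$. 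Together with $V(o)=0$ and $V\geq 0$ this certifies $V$ as a Lyapunov function and makes $o$ a stable equilibrium, establishing both conclusions.

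I expect the principal difficulty to be bookkeeping rather than conceptual: confirming that the interleaved quadrant structure of $R$ is consistent with the order in which the local vectors $[u_i;y_i]$ are concatenated into the global $[u;y]$, which amounts to a permutation/index-matching argument (or an explicit term-by-term expansion) before \eqref{eq:global_diss_cond} can be applied. A secondary point to flag is that the argument only delivers the non-strict inequalities $V\geq 0$ and $\dot{V}\leq 0$, so it yields Lyapunov stability rather than asymptotic stability; genuine positive definiteness of the learned storage functions is what underlies claim~(i) in the framework of this paper.
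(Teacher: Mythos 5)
Your proposal is correct and follows essentially the same route as the paper's own proof: differentiate $V=\sum_i\tilde{V}_i$, bound $\dot{V}$ by $\sum_i r_i$ via the local dissipation inequalities, rewrite that sum as the quadratic form $y^{\top}\bigl[\begin{smallmatrix} M\\ I\\ K\\ I\end{smallmatrix}\bigr]^{\top}\mathrm{diag}(R,R)\bigl[\begin{smallmatrix} M\\ I\\ K\\ I\end{smallmatrix}\bigr]y$ using the coupling \eqref{eq:linear-couple_eq}, and invoke \eqref{eq:global_diss_cond}. Your closing remark about obtaining only non-strict inequalities is a fair and careful observation --- the paper's proof actually writes strict inequalities where Definition~\ref{defn-dissiativity} only guarantees $\leq$, so your version is if anything the more honest accounting.
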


\begin{proof}
The proof follows a direct extension of the result given in \cite[Chapter 2]{arcak2016networks} as follows. Consider the networked system specified by \eqref{eq:subsystem}-\eqref{eq:linear-couple_eq}.  
Let $\tilde{V}_i$ be a valid storage function for subsystem $i$ with respect to the supply rate function $r_{i}$  such that it satisfies the dissipativity condition  \eqref{eq:defn-dissiativity}, for all $i \in \{1, \ldots, n\}$. Now, considering the function $V(x)=\sum\limits_{i=1}^n\tilde V_i(x_i)$, we have
\begin{equation}
\label{eq:appendix:diss_def_1}
        \dot{V}(x) = \sum\limits_{i=1}^{N} \nabla \tilde{V}_i  f_i(x_i, u_i, v_i) <     \sum\limits_{i=1}^{n} r_i(x_i, u_i, v_i), 
\end{equation}
where the inequality follows from the dissipativity condition \eqref{eq:defn-dissiativity}. Assuming the specific form for $r_{i}$ in \eqref{eq:supply-rate-ri}, we  have
\begin{align}
      \sum\limits_{i=1}^{n} r_i(x_i, & u_i, v_i) = \sum\limits_{i=1}^{N} \left[\begin{array}{c}
u_{i} \\
y_{i} \\
v_{i} \\
y_{i}
\end{array}\right]^{\top}
\left[\begin{array}{cc}
R_{i} & 0 \\
0 & R_i
\end{array}\right]
\left[\begin{array}{c}
u_{i} \\
y_{i} \\
v_{i} \\
y_{i}
\end{array}\right] \nonumber \\
\label{eq:appendix:diss_def_2}
&=y^\top
\left[\begin{array}{c}
{M} \\
{I} \\
{K} \\
{I}
\end{array}\right]^{\top}
\left[\begin{array}{cc}
R & 0 \\
0 & R
\end{array}\right]
\left[\begin{array}{c}
{M}\\
{I}\\
{K} \\
{I}
\end{array}\right]
y,
\end{align}
where the last equality follows from the definition of $u$ and $v$ with $R$ as specified in the proposition statement. 

Now, if \eqref{eq:global_diss_cond} is satisfied, using \eqref{eq:appendix:diss_def_1}-\eqref{eq:appendix:diss_def_2},  it immediately implies that $\dot{V}(x) < 0$. The positivity of $V$ also directly follows from that of $\tilde{V}_{i}$s. Thus, $V$ satisfies the condition to be a valid Lyapunov function for the networked system. 
\end{proof}

\begin{remark}
Proposition \ref{prop:diss_to_stab} will enable us to reduce the high dimensional problem of finding a Lyapunov function for the networked system to $n$ smaller problems of finding the  storage functions of the subsystems. For example, assuming $d^{x}_{i} = d$ for all $i$, instead of an $nd$ dimensional problem of finding the  Lyapunov function for the networked system, we only need to solve $n$ different $d$ dimensional problems of finding the  storage functions of the subsystems. Since the computational complexity typically grows exponentially in the dimension of the system, this will reduce the exponential dependence on $n$ to a linear dependence on $n$. This reduction in complexity is especially significant for large-scale systems formed by the interconnection of a large number of small subsystems.  
\end{remark}

\subsection{Learning  Neural Storage Functions and Secondary Controllers for Subsystems}
\label{sec:subsec-learning-storage-function}

Neural networks can efficiently learn/approximate a broad class of nonlinear functions using data,  leading to their success in many engineering applications. Recent works have exploited this property of neural networks to learn  Lyapunov functions for nonlinear systems \cite{kolter2019learning, chang2019neural, huang2021neural}. We follow a similar approach for learning the  storage function of individual subsystems. In addition to learning the  storage function,  we will also learn the  secondary controller $K_{i}$, in order to maximize the stability region of the networked system. We achieve these goals by designing a loss function for the neural network learning that  penalizes  any violation of the dissipativity condition given in \eqref{eq:defn-dissiativity}, and finding the parameter which minimizes this loss. 

For each susbsyetm $i$, we create the  training datasets $\mathcal{D}^{x}_i = \{x_i^1, \hdots, x_i^N\}$ and $\mathcal{D}^{u}_i = \{u_i^1, \hdots, u_i^N\}$ by drawing $N$ samples from their domains according to some distribution. We represent the  storage function $\tilde{V}_{i}$ as $\tilde{V}_{\theta_{i}}$ where  $\theta_{i}$ is the parameter of the corresponding neural network. We then define the empirical loss function for subsystem $i$ as
\begin{align}
     \label{eq:diss_loss_def}
    &L(\theta_{i}, K_{i}) =  \tilde{V}_{\theta_{i}}^{2}(0) + \frac{1}{N} \sum_{j=1}^{N}\Big[\max (-\tilde{V}_{\theta_{i}}(x_{i}^j), 0) \nonumber \\
    &+\max \big(0, \nabla \tilde{V}_{\theta_{i}}(x_{i}^j)^{\top} {f}_{i}(x_{i}^j, u_{i}^j, K_{i} y^{j}_{i}\big) -r_{i}\big(u_{i}^j, K_{i} y^{j}_{i}, y_{i}^j)\big)\Big],
\end{align}
where the  secondary control input $v^{j}_{i}$ is specified as $ K_{i} y^{j}_{i}$. We will train a neural network for each subsystem using the above loss function until convergence.

A neural network based  learning algorithm can  give only an \textit{empirical guarantee} with respect to its loss function and training data. So,  the  parameters of the neural network obtained by minimizing the above empirical loss function may not \textit{provably} yield a storage function that satisfies the condition for  dissipativity \eqref{eq:defn-dissiativity}. We overcome this issue as described below.

Satisfiability Modulo Theories (SMT) solvers  are widely used to verify correctness of symbolic arithmetic expressions in a specified region $\mathcal{B}$ in the space of real numbers \cite{barrett2018satisfiability}. In particular, given a symbolic expression (also known as logic formula) $\Phi(x)$, an SMT solver uses a falsification technique to identify \textit{counterexamples} $x \in \mathcal{B}$ where $\Phi(x)$ is violated. When no such counterexamples exists, $\Phi(x)$ is said to be provably valid in $\mathcal{B}$. Recent works have implemented SMT solvers to validate Lyapunov functions \cite{huang2021neural, chang2019neural}. We follow a similar procedure to validate the storage functions learned by our neural network. Specifically, by using an SMT solver at every neural network training step, we validate $\tilde{V}_{\theta_i}$ by checking the following first order logic formula over real numbers:
\begin{align}
    \label{eq:SMT_diss_logic}
    \Phi(x_i,u_i) =& 
    \Big( \varepsilon_{u_i}\geq\|x_{i}\|^{2} \geq \varepsilon_{l_i}\Big) \wedge \Big(\varepsilon_{u_i}\geq \|u_{i}\|^{2} \geq \varepsilon_{l_i}\Big) \wedge \nonumber\\
    &\Big(\tilde{V}_{\theta_{i}}(x_i) \leq 0 ~\vee~ \nabla V_{\theta_{i}}(x_i)^{\top} {f}_{i}(x_i, u_i,K_{i} y_{i})\nonumber\\
    &-r_{i}(u_i, K_{i} y_{i}, y_i))\geq 0\Big),
\end{align}
where $\epsilon_{l_i}$ is a small lower-bound to avoid arithmetic underflow, and $\epsilon_{u_i}$ is an upper-bound that defines the verification region $\mathcal{B}_i$. We use a specific SMT solver called dReal \cite{gao2013dreal} to verify the above logic formula. 

Under the conditions   as described in \cite{chang2019neural, gao2013dreal, gao2012delta}, an SMT solver always returns a counterexample if there exists any. We add these counterexamples to our training dataset $\mathcal{D}_i$, and retrain the neural network to improve the validity of $\tilde{V}_{\theta_i}$ over $\mathcal{B}_i$. When no counterexample is returned, we complete the neural network training, and declare $\tilde{V}_{\theta_i}$ as a valid storage function. Performing this SMT verification based neural network training for all subsystems, we get $(\tilde{V}_{\theta_i}, K_{i})$ which can provably  certify  the dissipativity of the subsystem $i$ with respect to the supply rate function $r_{i}$, for all $i\in\{1, \hdots, n\}$.


\subsection{Distributed Learning of  Neural Lyapunov Function for the Networked System}

After learning the  neural storage functions for all subsystems, we can form a candidate neural Lyapunov function $V$ for the networked system as, $V(x) = \sum_{i} \tilde{V}_{\theta_{i}}(x_i)$. However, the fact that $\tilde{V}_{\theta_{i}}$ is a valid storage function for subsystem $i$ with respect to the supply rate function $r_{i}$, for all $i$, does not immediately imply that $V$ is a valid Lyapunov function for the networked system. The linear matrix inequality (LMI) \eqref{eq:global_diss_cond}  involving $M, K$ and $R_{i}$s (which represents $r_{i}$s) in Proposition \ref{prop:diss_to_stab} gives a sufficient condition for such a $V$ to be valid Lyapunov function for the networked system. Therefore,  if  \eqref{eq:global_diss_cond}  is not satisfied, we change $R_{i}$s (equivalently $r_{i}$s) and relearn the storage functions for the subsystems with respect to these new supply rate functions. We repeat this procedure until convergence. We employ  the alternating direction method of multipliers (ADMM) approach for  the efficient implementation and guaranteed convergence of this proposed procedure \cite{arcak2016networks}, as described below.

To bring the condition \eqref{eq:global_diss_cond} into the canonical optimization form, we will define the indicator function $\mathbb{I}_{\text{global}}$ as
\begin{align}
    & \mathbb{I}_{\text{global}}(R_1,\hdots,R_n) :=
\begin{cases}
& 0 \qquad \text{if \eqref{eq:global_diss_cond} holds true}\\
& \infty \qquad \text{otherwise.}
\end{cases}
\end{align}
Similarly, to bring the dissipativity condition \eqref{eq:defn-dissiativity} of each subsystem into the optimization form, we define $n$ indicator functions $\mathbb{I}_{\text{Local},i}$ as
\begin{align}
    & \mathbb{I}_{\text{local},i}(R_i,\tilde V_{\theta_i}) := 
\begin{cases}
& 0  \qquad \text{if \eqref{eq:defn-dissiativity}  holds true}\\
& \infty \qquad \text{otherwise.}
\end{cases}
\end{align}

The problem of finding a Lyapunov function for the networked system can now be written as the following optimization problem:
\begin{align}
    \label{eq:ADMM_opt}
    \min_{(R_{i}, \tilde V_{\theta_i}, Z_{i})^{n}_{i=1}} ~~~&\sum_{i=1}^{n} \mathbb{I}_{\text {local}, i}\left(R_{i}, \tilde V_{\theta_i}\right) +\mathbb{I}_{\text {global}}\left(Z_{1}, \ldots, Z_{n}\right) \nonumber\\
\text{s.t.} ~~~& R_{i}-Z_{i}=0 \quad \text { for } i=1, \ldots, n.
\end{align}
We construct the optimization problem in above way because through the auxiliary variables $Z_{i}$s, this network-level optimization problem can  be divided into  $n+1$ smaller sub-problems and can be solved in a distributed manner.  The iterative steps of the algorithm are as follows:
\begin{subequations}
\label{eq:RZS-update}
\begin{align}
\label{eq:R-update}
     R_i^{k+1} &=\argmin_{R_{i}~\text{s.t.}~ \eqref{eq:defn-dissiativity}} \|R_i-Z_{i}^{k}+S_{i}^{k}\|_{F}^{2}, \\
     \label{eq:Z-update}
(Z^{k+1}_{i})^{n}_{i=1} &= \argmin_{(Z_{i})^{n}_{i=1}~\text{s.t.}~ \eqref{eq:global_diss_cond}} \sum_{i=1}^{n}\|R_{i}^{k+1}-Z_{i}+S_{i}^{k}\|_{F}^{2}, \\
\label{eq:S-update}
 S_i^{k+1} &= R_{i}^{k+1}-Z_{i}^{k+1}+S_{i}^{k}, 
\end{align}
\end{subequations}
where $(S_i)_{i=1}^n$ are matrices to have consensus between $(R_i)_{i=1}^n$ and $(Z_i)_{i=1}^n$, and $\|.\|_{F}$ is the Frobenius norm.

The first step \eqref{eq:R-update} finds a new supply rate $R^{k+1}_{i}$ close to the auxiliary variable $Z^{k}_{i}$ given the consensus variable  $S^{k}_{i}$. We then obtain the storage function  $\tilde{V}_{\theta_i}$ and secondary controller $K_{i}$ with respect to $R^{k+1}_{i}$ through the neural network training described in Section \ref{sec:subsec-learning-storage-function}. We emphasize that this step can be implemented in parallel for the subsystems. Since parallel neural network training can be performed at scale using the modern GPU architecture, this approach offer significant reduction in training time.  The second step \eqref{eq:Z-update} is to impose the Lyapunov condition for the networked system. It finds (auxiliary) supply rate functions $Z^{k+1}_{i}$ close to  $R^{k+1}_{i}$ given the consensus variables $S^{k}_{i}$ such that they satisfy the sufficient condition \eqref{eq:global_diss_cond} to get a Lyapunov condition for the networked system. While this update step is centralized, we note that this is a simple convex optimization problem and does not involve any neural network training. Finally, step \eqref{eq:S-update} updates the consensus variables and the iteration continues until convergence.

We summarize the stages in the above subsections as a concise algorithm in Algorithm \ref{algor:DNL-Algo}.  

\begin{algorithm}
    \begin{algorithmic}[1]
\caption{ Distributed Neural Lyapunov Function Learning Algorithm}
\label{algor:DNL-Algo}
    \STATE {\textbf{Input:}} Initial supply rate matrices $(R_{i})^{n}_{i=1}$,  initial consensus metrics $(S_{i})^{n}_{i=1}$, input-output coupling matrix $M$, tolerance level $\eta$.
   \REPEAT
   \STATE Learn neural storage functions $(\tilde{V}_{\theta_{i}})^{n}_{i=1}$ and secondary controllers $(K_{i})^{n}_{i=1}$ according to the procedure described in Section \ref{sec:subsec-learning-storage-function} with respect to the current supply rate functions $(R_{i})^{n}_{i=1}$
  \STATE Update auxiliary variables $(Z_{i})^{n}_{i=1}$ according to \eqref{eq:Z-update}
  \STATE Update consensus variables $(S_{i})^{n}_{i=1}$ according to \eqref{eq:S-update}
  \STATE Update supply rate functions $(R_{i})^{n}_{i=1}$ according to \eqref{eq:R-update}
   \UNTIL{($\sum_i||R_i-Z_i||>\eta$)}
       \STATE {\textbf{Output:}}   Lyapunov function $V = \sum^{n}_{i} \tilde{V}_{\theta_{i}}$ for the networked system
\end{algorithmic}
\end{algorithm}

\section{Case Studies}
\label{sec:case_studies}

In this section, we demonstrate the performance of our algorithm by evaluating it on three test cases: $(i)$ Three microgrids network with conventional angle droop control, $(ii)$  Four microgrids network with quadratic voltage droop control, and $(iii)$  IEEE 123-node test feeder network with conventional voltage and angle droop controls. We compare the  performance of our algorithm with the following approaches for finding the Lyapunov functions. 
\begin{itemize}
    \item\textit{Quadratic control Lyapunov  (QCL) function},    which  computes the Lyapunov function by linearizing the  dynamics around the origin  \cite{chang1995direct}. Additionally, a linear quadratic regulator (LQR) based secondary controller is designed for the improved stability of the system.
    \item \textit{Quadratic Lyapunov (QL) function}, which finds the Lyapunov function in the same way as QCL  but without a secondary controller. 
    \item \textit{Centralized neural control Lyapunov (CNCL) function}  finds a Lyapunov function through neural network based centralized learning  \cite{chang2019neural}. It also learns a secondary controller through neural network based learning. 
    \item \textit{Centralized neural Lyapunov function} (CNL) is the same as CNCL  but without a secondary controller  \cite{huang2021neural}. 
\end{itemize}

We evaluate two version of our algorithms:
\begin{itemize}
    \item \textit{Distributed neural control (DNCL) Lyapunov function}  learns the the Lyapunov function as described in Algoirthm \ref{algor:DNL-Algo}. 
    \item \textit{Distributed neural  Lyapunov (DNL) function} is same as DNCL  but without a secondary controller.
\end{itemize}

\subsection{Dynamics of the Network of Microgrids }
Since our case studies focuses on the network of microgirds, we first give a brief description about their dynamics and the classical control schemes used in the literature. 

\begin{figure}[h]
    \centering
    \includegraphics[width = \linewidth]{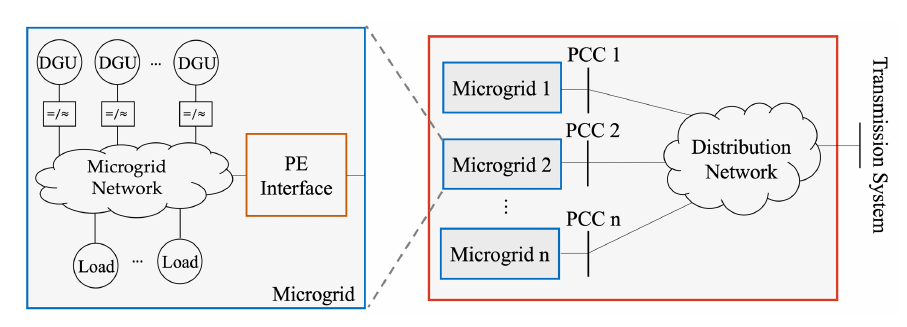}
    \caption{A distribution system consisting of networked microgrids  \cite{huang2021neural}. The internal structure of a single microgrid is shown in the left box.}
    \label{fig:networked_MG}
\end{figure} 

Fig. \ref{fig:networked_MG} shows a network of microgrids  where each microgrid comprises of several distributed generation units (DGUs), loads and power electronics (PE) interfaces \cite{zamora2016multi}. We assume that the microgrids function in a grid-connected mode and they  are networked with each other via their points of common coupling (PCCs) and distribution lines. The power injections to the PCCs are governed by the standard AC power flow equations as follows:
\begin{subequations}
\begin{align}
    P_i &= \sum_{k\ne i}E_iE_kY_{ik} \cos(\delta_{ik} - \gamma_{ik}) + E_i^2G_{ii}\label{eq:P_def},\\
     Q_i &= \sum_{k\ne i}E_iE_kY_{ik} \sin(\delta_{ik} - \gamma_{ik}) - E_i^2B_{ii}
     \label{eq:Q_def},
\end{align}
\end{subequations}
where $\delta_{ik} = \delta_i -\delta_k$, $E_i$ and $\delta_i$ are voltage magnitude and phase angle of the $i$\text{th} microgrid;  $P_i$ and $Q_i$ are active and reactive power injections at the $i$\text{th}  PCC; $G_{ii}+\emph{j} B_{ii}$ is the $i$\text{th} diagonal element in the admittance matrix; and $Y_{ik}$ and $\gamma_{ik}$ are the magnitude and angle of the  $(i,k)$\text{th} element in the admittance matrix respectively.

We consider two different control schemes that are widely used in  the literature \cite{guerrero2010hierarchical,simpson2016voltage}.

\subsubsection{Conventional droop control}
\label{sec:conv_droop}
In general, each microgrid employs a primary controller to stabilize the network and proportionally share the load among all the DGUs. This is achieved by setting up a control loop at a PE interface placed next to the microgrid. Droop controllers are the most extensively used control schemes, and they are based on active or reactive power decoupling \cite{guerrero2010hierarchical}. The droop control based  dynamics at $i$\text{th} microgrid is given by
\begin{subequations}
\label{eq:droop_dynamics}
\begin{align}
    J_{\delta_i}\Delta\dot{\delta_i} &= -D_{\delta_i}\Delta \delta_i-\Delta P_i +K_i\Delta\delta_i\label{eq:droop_dynamics_a},\\
    J_{E_i}\Delta\dot{E_i}&= -D_{E_i}\Delta E_i-\Delta Q_i\label{eq:droop_dynamics_b}, \\
    &\eqref{eq:P_def}\text{ and }\eqref{eq:Q_def}\text{ hold true,}
\end{align}
\end{subequations}
where $\Delta E_i =E_i-E_i^*$;  $\Delta \delta_i =\delta_i-\delta_i^*$; $\Delta P_i =P_i-P_i^*$,  $\Delta Q_i =Q_i-Q_i^*$; $E_i^*$, $\delta_i^*$, $P_i^*$, $Q_i^*$ are the nominal set point values of the voltage magnitude, phase angle, active power and reactive power at the $i$\text{th} microgrid, respectively; $J_{\delta_i}$ and $J_{E_i}$ are tracking time constants; $D_{\delta_i}$ and $ D_{E_i}$ are droop coefficients; and $K_i$ is an output-feedback gain matrix. It is easy to verify that the above dynamics conform to the general form of dynamics given in \eqref{eq:subsystem}-\eqref{eq:linear-couple_eq}  with $x = [\Delta \delta_1, \Delta E_1, \ldots, \Delta \delta_n, \Delta E_n]^{\top}$, $h(x, u, v) = x$, $u= [\Delta P_1, \Delta Q_1, \ldots, \Delta P_n, \Delta Q_n]^{\top}$, $v = [K_1\Delta\delta_1, \ldots, K_n\Delta\delta_n]$. The functions  ${f}$  can derived from \eqref{eq:droop_dynamics_a}-\eqref{eq:droop_dynamics_b} and $g$  can derived from \eqref{eq:P_def}-\eqref{eq:Q_def}. 

\subsubsection{Quadratic  droop control}
\label{sec:quad_droop}

{Quadratic droop control} \cite{simpson2016voltage} offers  improvement both in theory and practice compared to the conventional droop control. The corresponding  dynamics at $i$th microgrid is given by
\begin{align}
\label{eq:quad_droop_dynamics}
    J_{E_i}\Delta\dot{E_i}&= -D_{E_i}\Delta E_i\big(\Delta E_i+E_i^*\big)-\Delta Q_i + K_i\Delta E_i,
\end{align}
such that  \eqref{eq:Q_def} holds true. Similar to the previous case, this dynamics also conform to  general form given  in \eqref{eq:subsystem}-\eqref{eq:linear-couple_eq}, with  $x = [\Delta E_1, \ldots, \Delta E_n]^{\top}$, $h(x,u) = x$, $u= [\Delta Q_1, \ldots, \Delta Q_n]^{\top}$, $v = [K_1\Delta E_1, \ldots, K_n\Delta E_n]$.   The functions  ${f}$  can derived from \eqref{eq:droop_dynamics_a}-\eqref{eq:droop_dynamics_b} and $g$  can derived from \eqref{eq:P_def}-\eqref{eq:Q_def}. 


In the remainder of this section, we consider specific test cases and validate the effectiveness of our algorithms for both conventional and quadratic droop control cases.

\subsection{Test Case I: Three microgrids network with conventional  angle droop control}

As standard in the power system literature, we assume that the dynamics of voltage magnitudes are predominantly slow  while the phase angles have fast dynamics, which is formalized by assuming that  $J_{E_i} \gg J_{\delta_i}$ in  \eqref{eq:droop_dynamics}. This is referred to as \textit{time scale separation} in power system literature \cite{winkelman1980multi}. With this assumption, we only focus on the behavior of the phase angle $\Delta \delta_i$ and active power $P_i$ for each subsystem $i \in\{1,2,3\}$. The 
distribution line parameters values are selected as given in   \cite{huang2021neural}, and the control parameters and set-point values are given in Table \ref{table:3mg:control_and_dispatch_params}. For neural network training,  we use a two layer network,  training sample size  2000, and learning rate  0.01. The  parameters for SMT based verification are $\epsilon_{l_i}=0.05$ and $\epsilon_{u_i}=0.8$ for $i \in \{1,2,3\}$. The tolerance level for the algorithm is selected as $\eta = 10^{-6}$. The algorithm converged  with only three iterations of the outer loop.



\begin{table}[b]
\caption{Control parameters and reference setpoints of the three microgrids network   \cite{huang2021neural}}
\label{table:3mg:control_and_dispatch_params}
\vspace{-1em}
\begin{center}
\begin{tabular}{ c|c|c|c } 
 \hline
   & MG 1 & MG 2 & MG 3 \\
 \hline
 $J_{\delta_i}$ &1.2 & 1.0 & 0.8\\ 
 \hline
 $D_{\delta_i}$ &1.2   & 1.2 & 1.2\\
 \hline
 $\delta_i^*$ (deg.)& 0& $55.67$ &$-45.37$\\
 \hline
 $E_i$ (p.u.) & 1& 1.05& 0.95 \\
 \hline
  $P_i^*$ (p.u.)& 0.1706 & 1.4578 &-0.0013\\
 \hline
\end{tabular}
\end{center}
\vspace{-1em}
\end{table}

\begin{figure} 
    \centering
  \subfloat[]{%
       \includegraphics[width=0.48\linewidth]{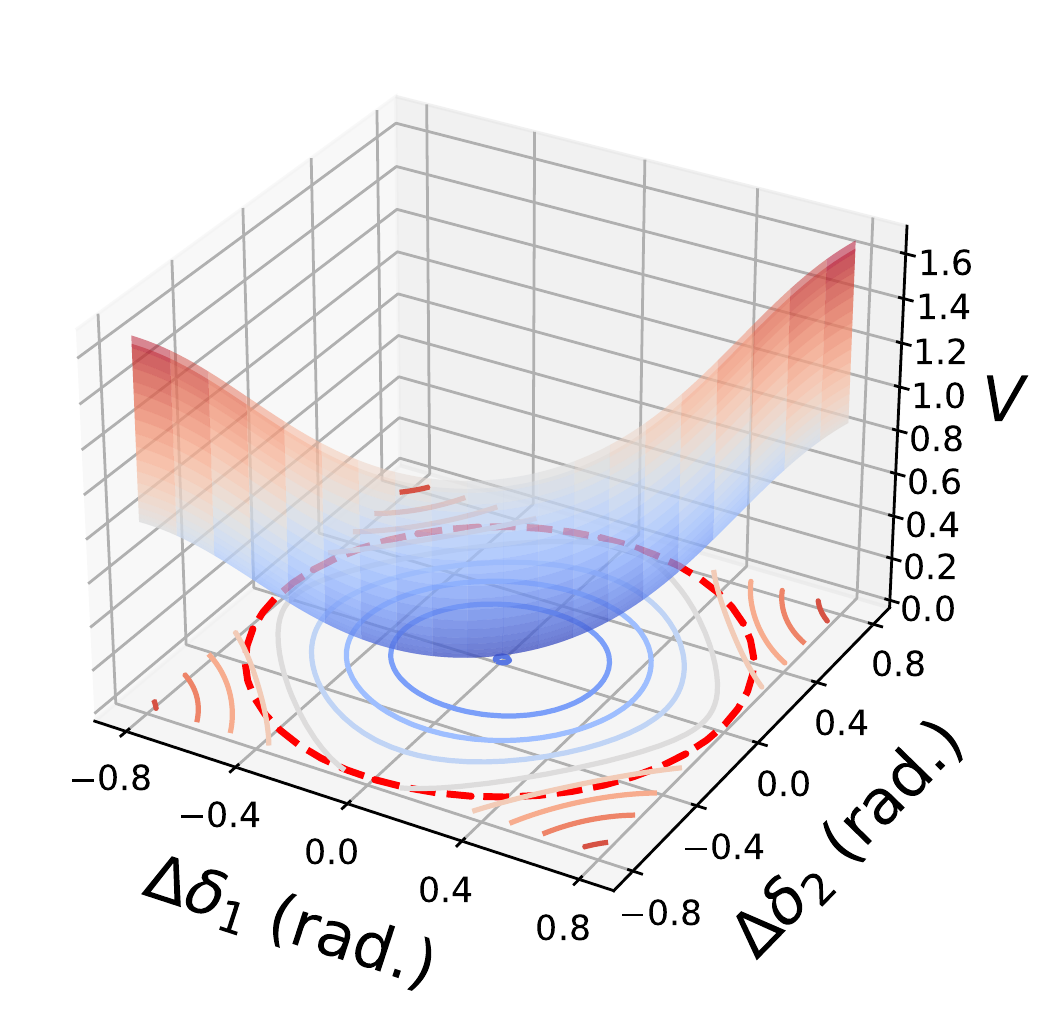}}
    \hfill
  \subfloat[]{%
        \includegraphics[width=0.48\linewidth]{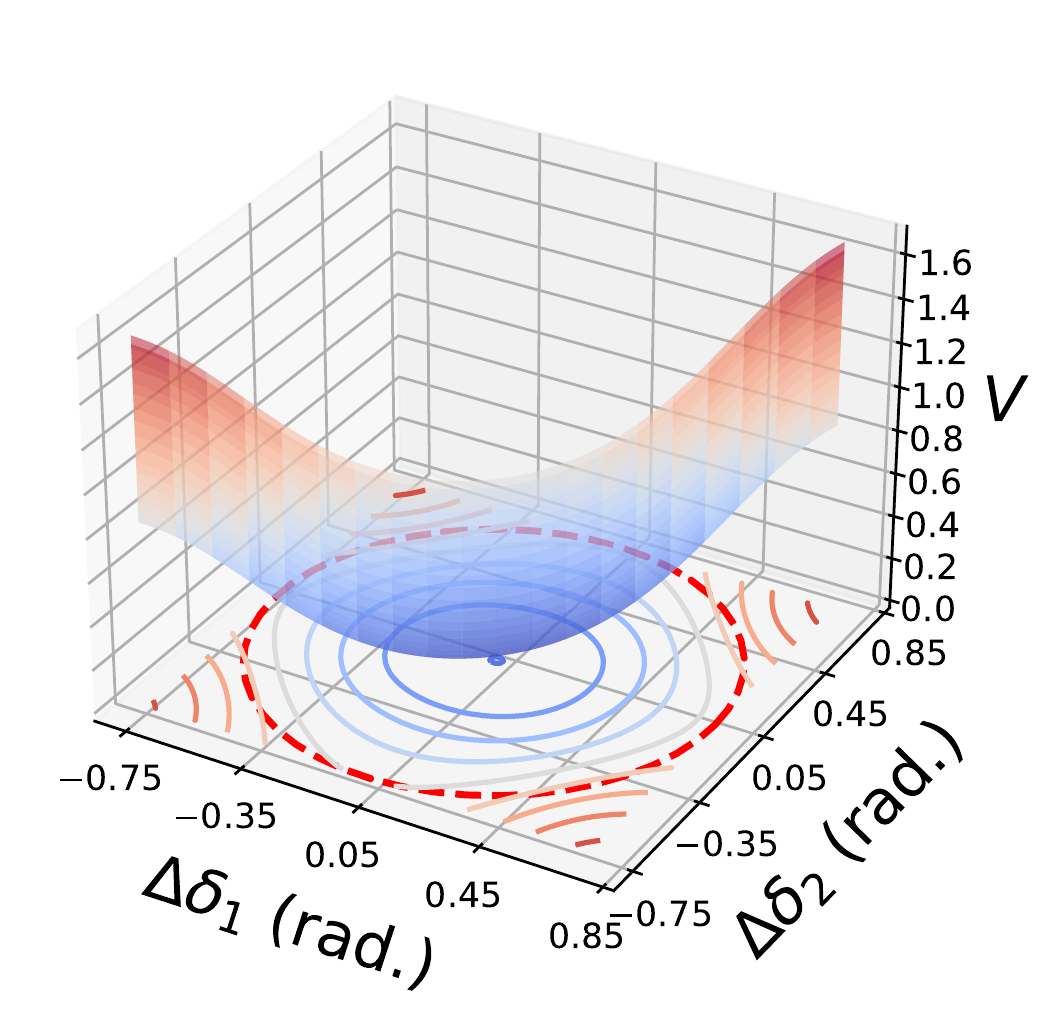}}
    \vfill
      \subfloat[]{%
       \includegraphics[width=0.48\linewidth]{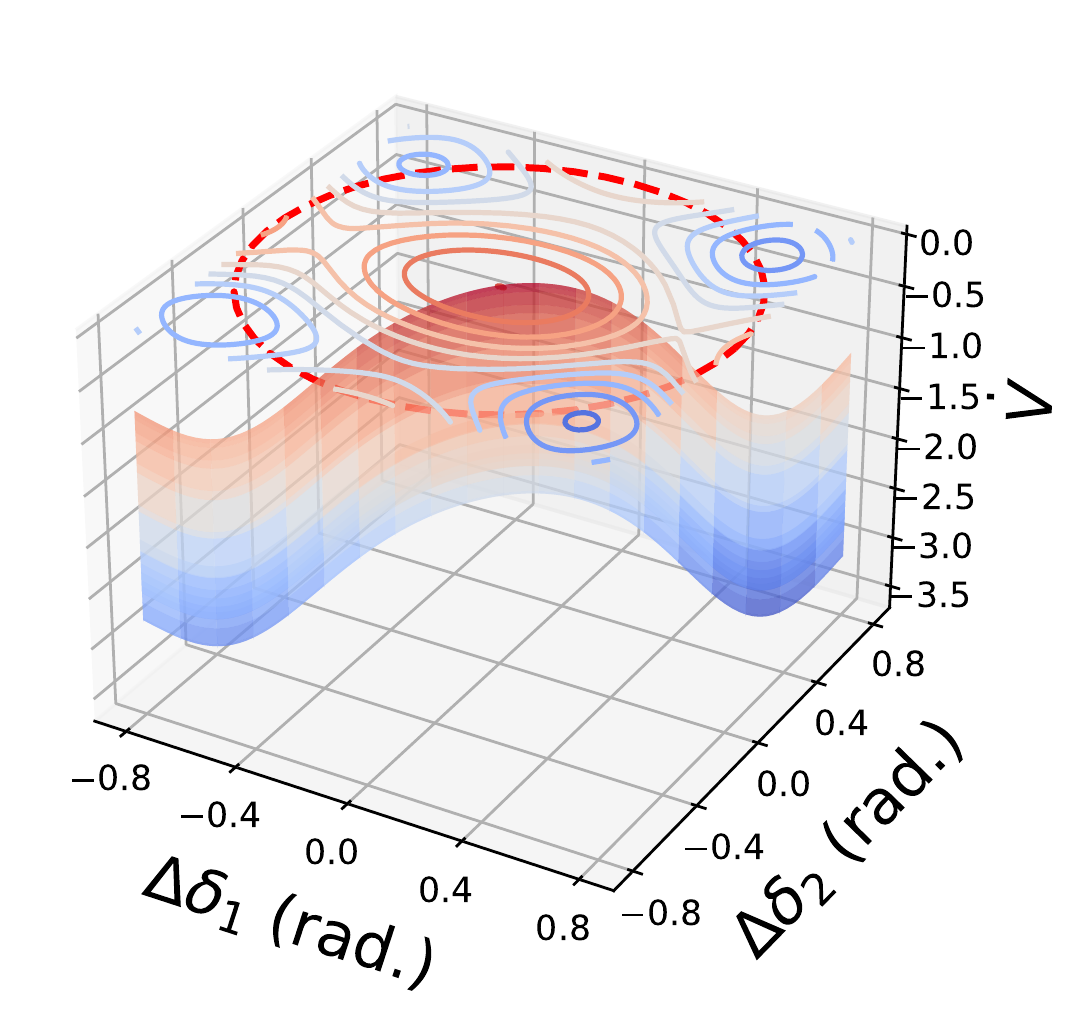}}
    \hfill
  \subfloat[]{%
        \includegraphics[width=0.48\linewidth]{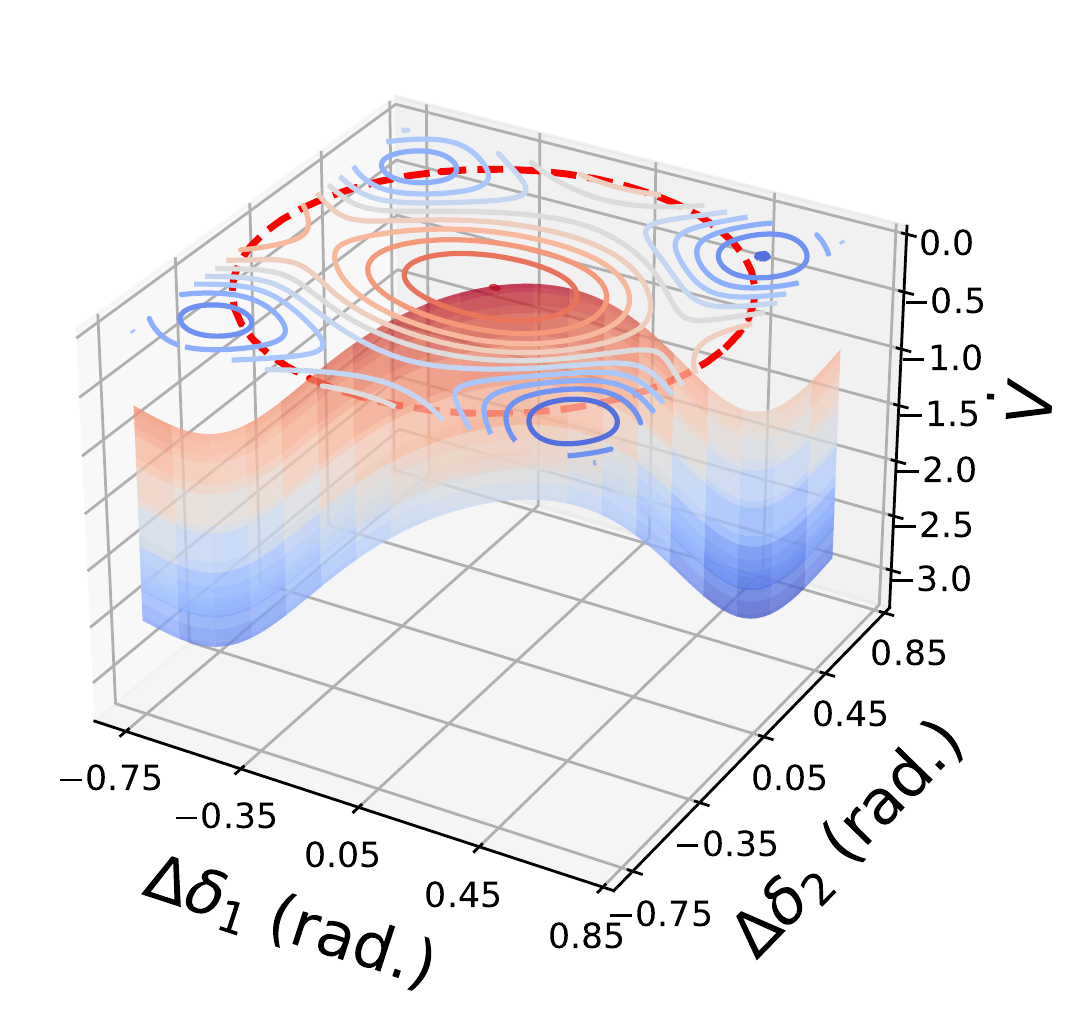}}
  \caption{Visualization of the distributed neural Lyapunov functions ($V$) and Lie derivatives ($\dot V$) for the three microgrids system. (a) Projection of DNCL function on $\Delta\delta_1$-$\Delta\delta_2$ plane; (b) Projection of DNL function on $\Delta\delta_1$-$\Delta\delta_2$ plane; (c) Projection of Lie derivative of DNCL function on $\Delta\delta_1$-$\Delta\delta_2$ plane; (d) Projection of Lie derivative of DNL function on $\Delta\delta_1$-$\Delta\delta_2$ plane. The  functions are positive  and their Lie derivatives are negative, resulting in a valid local Lyapunov functions.}
  \label{fig:3mg:lyap_3d_plots} 
\end{figure}

Fig.~\ref{fig:3mg:lyap_3d_plots}(a) shows the DNCL function obtained using our algorithm, projected on the $\Delta \delta_1$-$\Delta \delta_2$ plane.  Similarly, Fig.~\ref{fig:3mg:lyap_3d_plots}(b) shows DNL function obtained using our algorithm. Fig.~\ref{fig:3mg:lyap_3d_plots}(c) and Fig.~\ref{fig:3mg:lyap_3d_plots}(d) shows the Lie derivative ($\dot{V}$) of the  DNCL function and  DNL function, respectively. We note that to compute Lie derivatives,  we indeed considered the original nonlinear dynamics instead of the linear approximation for coupling. It is clear from these figures that the both DNCL and DNL functions  are positive and their Lie derivatives are negative over the specified domain, and hence they  satisfy the requirements to  be a valid Lyapunov function as given in Definition \ref{def:lyapunov}. The plots of these Lyapunov functions and their Lie derivatives on rest of the projection planes follow the same pattern as in Fig. \ref{fig:3mg:lyap_3d_plots}, and are omitted for brevity.

Fig.~\ref{fig:3mg:ROA_comp} shows the RoAs estimated using the Lyapunov functions obtained using our algorithm and other approaches mentioned in the beginning of this section.  It is clear from the figue that, $(i)$ the  ROAs of DNCL and DNL are significantly larger than the one obtained through the classical approaches  QCL and QL, and $(ii)$ the RoA of DNCL is slightly larger than that of the DNL because the additional secondary controller in DNCL. We note that the effect of secondary controller on getting larger RoA will become more prominent as the system dimension increases, as we show in the next case studies.  Thus, we show that our approach  is less conservative in stability assessment as compared to traditional methods. Not surprisingly, the ROAs of  DNCL and DNL are smaller than that of the centralized methods CNCL and CNL because of the performance limitations of distributed methods. However our distributed method offers significant reduction in the training time especially for large system,  as shown in  Table \ref{table:compute_time_comp}.

\begin{figure}
    \centering
    \includegraphics[width =\linewidth]{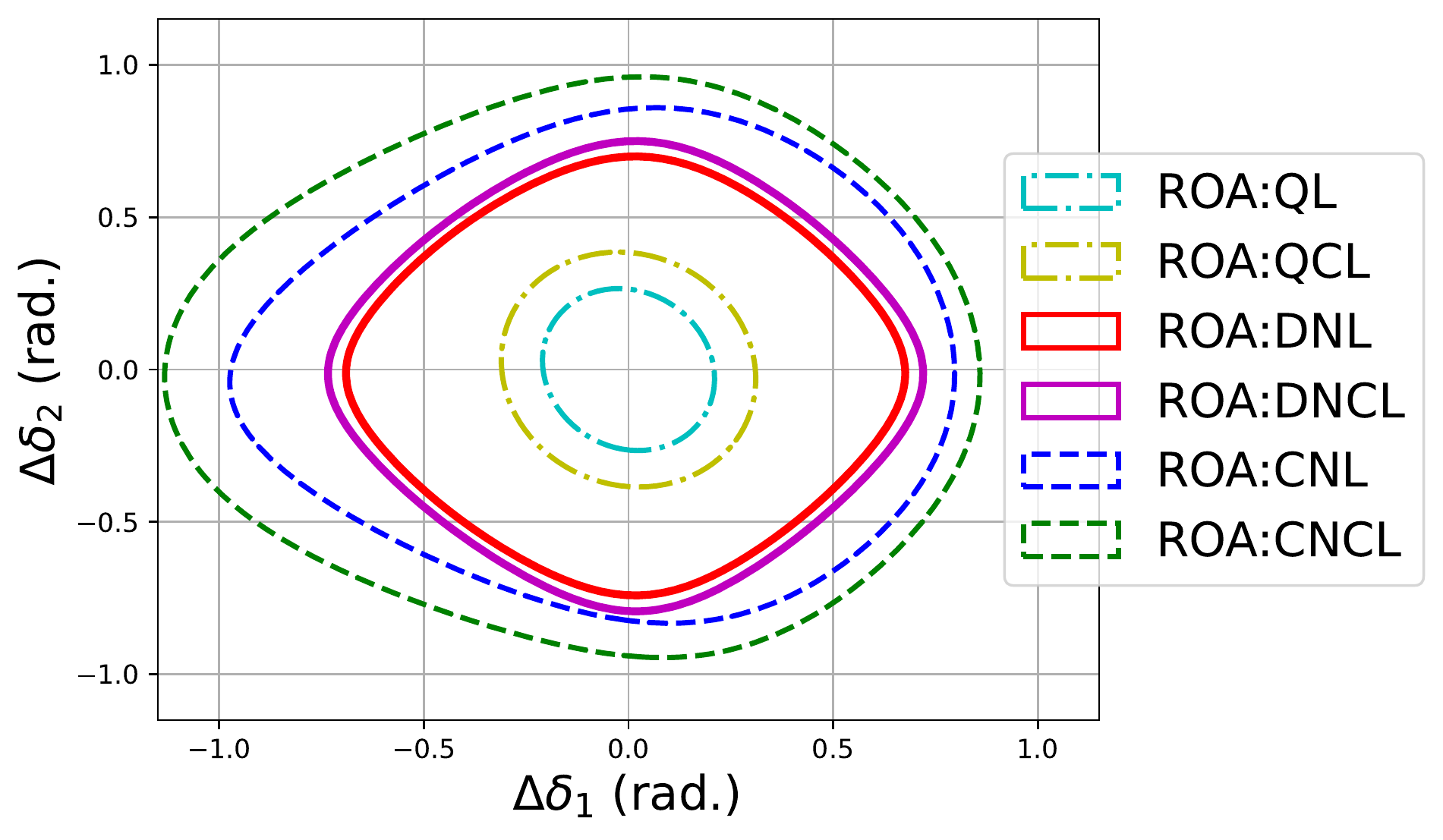}
    \caption{Comparison of the estimated region of attraction (RoA) from different approaches for the three microgrid network. Projections on $\Delta\delta_1$-$\Delta\delta_2$ space (with $\Delta\delta_3 =0$) is shown here.}
    \label{fig:3mg:ROA_comp}
\end{figure}



\subsection{Test Case II: Four microgrids network with quadratic voltage droop  control}


\begin{table}
\caption{Control parameters and reference setpoints of the four microgrids network \cite{teixeira2015voltage}}
\label{table:4mg:control_and_dispatch_params}
\vspace{-1em}
\begin{center}
\begin{tabular}{ c|c|c|c|c } 
 \hline
   & MG 1 & MG 2 & MG 3 & MG 4\\
 \hline
 $J_{\delta_i}$ &1 &1  & 1 & 1\\ 
 \hline
 $D_{\delta_i}$ &0.2 & 0.2 & 0.2 & 0.2\\
 \hline
 $\delta_i$ (deg.) &$0$    & $-6.30$ & $-3.72$ &$-10.027$\\
 \hline
 $E_i^*$ (p.u.) & 1& 1& 1 &1 \\
 \hline
  $Q_i^*$ (p.u.)& 0.013 & -0.013 & 0.017 & -0.009\\
 \hline
\end{tabular}
\end{center}
\vspace{-1em}
\end{table}

We next consider a line network of four microgrids with quadratic voltage droop control \eqref{eq:quad_droop_dynamics}. All the set-point values and control parameters are shown in table \ref{table:4mg:control_and_dispatch_params} and the distribution line parameters are taken from \cite{teixeira2015voltage}.   For neural network training,  we use a two layer network,  training sample size  2000, and learning rate  0.01. The  parameters for SMT based verification are $\epsilon_{l_i}=0.06$ and $\epsilon_{u_i}=0.7$. The tolerance level for the algorithm is selected as $\eta = 10^{-6}$. The algorithm converged  with only four iterations of the outer loop.

\begin{figure}[h]
    \centering
    \includegraphics[width =\linewidth]{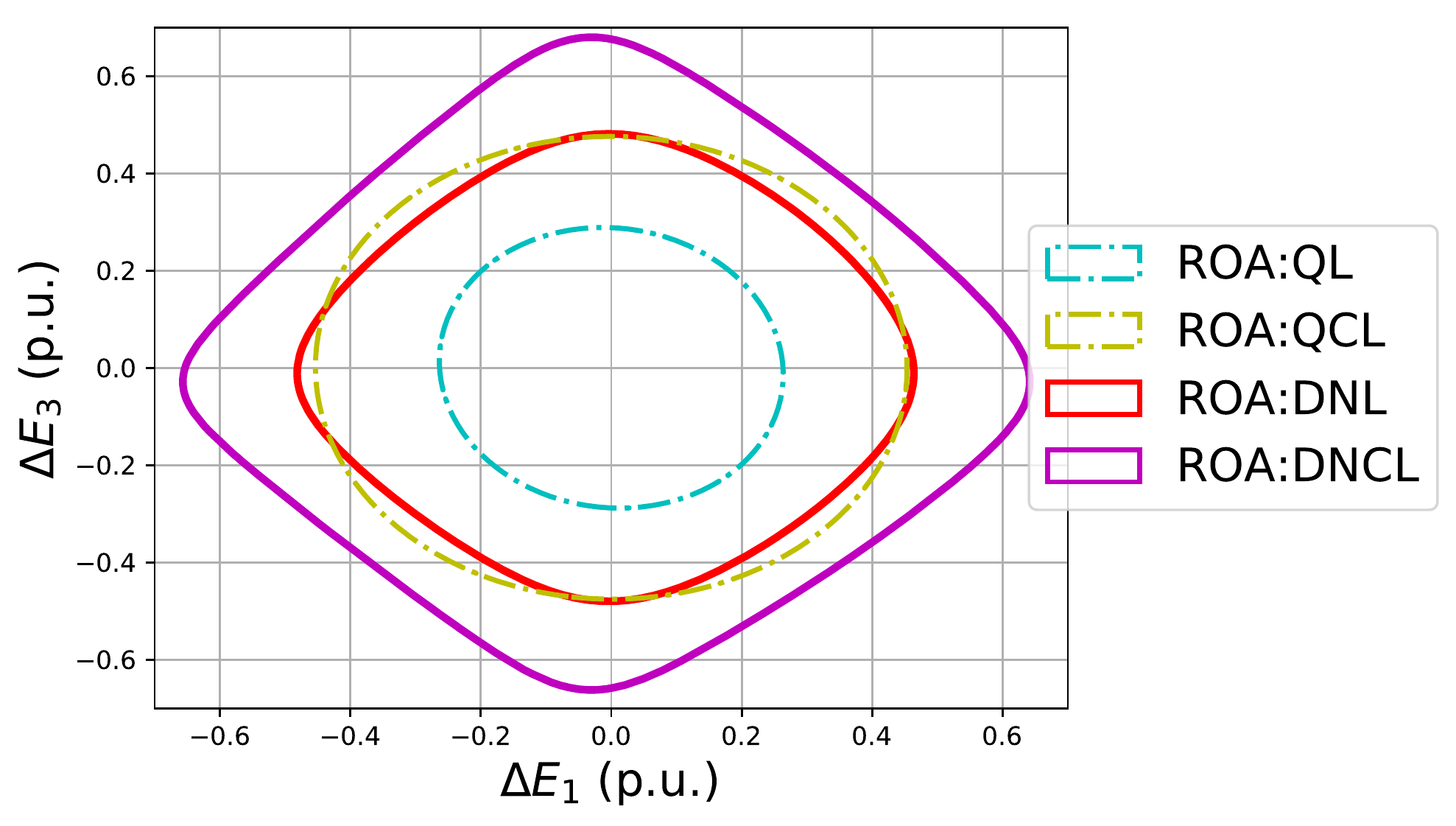}
    \caption{Comparison of the estimated region of attraction (RoA) from different approaches for the four microgrids network. Projections  on $\Delta E_1$-$\Delta E_3$ plane (with $\Delta E_2 =\Delta E_4 =0$) are shown.}
    \label{fig:4mg:ROA_comp}
\end{figure}


Fig.~\ref{fig:4mg:ROA_comp} shows the RoAs estimated using the Lyapunov functions obtained using our algorithm and other approaches mentioned in the beginning of this section. We can make the following observations: $(i)$ DNCL has larger ROA  compared to QCL and QL, $(ii)$ DNL's  ROA is larger than QL but  is  comparable with QCL,  and $(iii)$ DNCL has larger ROA compared to DNL. ROAs of DNL and QCL are close to each other  due to an LQR based secondary controller associated with QCL. Thus, our methods are better than the  traditional methods for estimating the RoA. Because of the complexity in the dynamics due to quadratic droop control, the SMT verification associated with CNL and CNCL couldn't converge. Hence, we omit the RoAs of CNL and CNCL from Fig.~\ref{fig:4mg:ROA_comp}. Finally, we conclude that for such systems that have complex non-linear nodal dynamics, DNL and DNCL are better suitable than centralized neural Lyapunov approaches.

\subsection{Test Case III: IEEE 123-node test feeder network with conventional voltage and angle droop control}

\begin{figure}
    \centering
    \includegraphics[width = \linewidth]{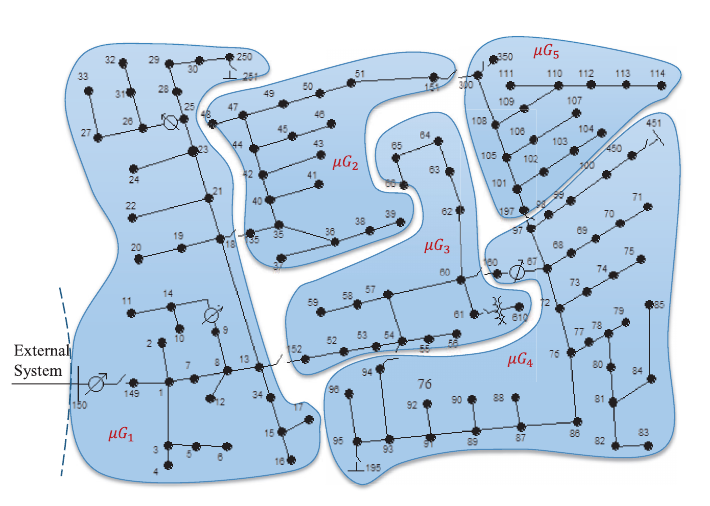}
    \caption{IEEE 123-node test feeder converted into a five-microgrid system \cite{sivaranjani2020distributed}}
    \label{fig:fiveMG_system}
\end{figure}

\begin{table}[b]
\caption{Control parameters and reference set-points of IEEE 123-node test feeder network \cite{sivaranjani2020distributed}}
\label{table:5mg:control_and_dispatch_params}
\vspace{-1em}
\begin{center}
\begin{tabular}{ c|c|c|c|c|c } 
 \hline
   & MG 1 & MG 2 & MG 3 & MG 4 & MG 5\\
 \hline
 $J_{\delta_i}$ &8 &12  & 12 & 9 & 10\\ 
 \hline
 $J_{E_i}$ &12.9 &10.2  & 11.56 & 10.83 & 11.73\\ 
 \hline
 $D_{\delta_i}$ &2.356 & 2.2 & 2.356 & 2.356 & 2.08\\
 \hline
  $D_{E_i}$ &2.50 & 2.0 & 2.222 & 2.083 & 2.272\\
 \hline
 $\delta_i^*$ (deg.) &$0$    & $0.233$ & $0.110$ &$0.158$ &$0.052$\\
 \hline
 $E_i^*$ (p.u.) & 1& 1.003& 1 &1.003 &0.999\\
 \hline
  $P_i^*$ (p.u.)& -0.13 & 0.57 & -0.25 & 0.53 & -0.72\\
 \hline
  $Q_i^*$ (p.u.)& 0.88 &-0.01 & -0.1 & 0.08 & -0.85\\
 \hline
\end{tabular}
\end{center}
\vspace{-1em}
\end{table}

Fig.~ \ref{fig:fiveMG_system} shows the IEEE 123-node test feeder system partitioned into five microgrids. We do not assume any time scale separation for this case, and consider conventional voltage and angle droop controls as primary control schemes. The set-point values and control parameters are given in Table \ref{table:5mg:control_and_dispatch_params}, and the distribution line parameters are selected as given in  \cite{sivaranjani2020distributed}. The system dynamics is as specified in  \eqref{eq:droop_dynamics}. We linearize the input-output coupling relationship around origin $o$ and convert the dynamics to the standard form as given in \eqref{eq:subsystem}-\eqref{eq:linear-couple_eq}. For neural network training,  we use a two layer network,  training sample size  4000, and learning rate  0.01. The  parameters for SMT based verification are $\epsilon_{l_i}=0.06$ and $\epsilon_{u_i}=0.6$. The tolerance level for the algorithm is selected as $\eta = 10^{-6}$. The algorithm converged  with only four iterations of the outer loop.   



\begin{figure}
    \centering
    \includegraphics[width =\linewidth]{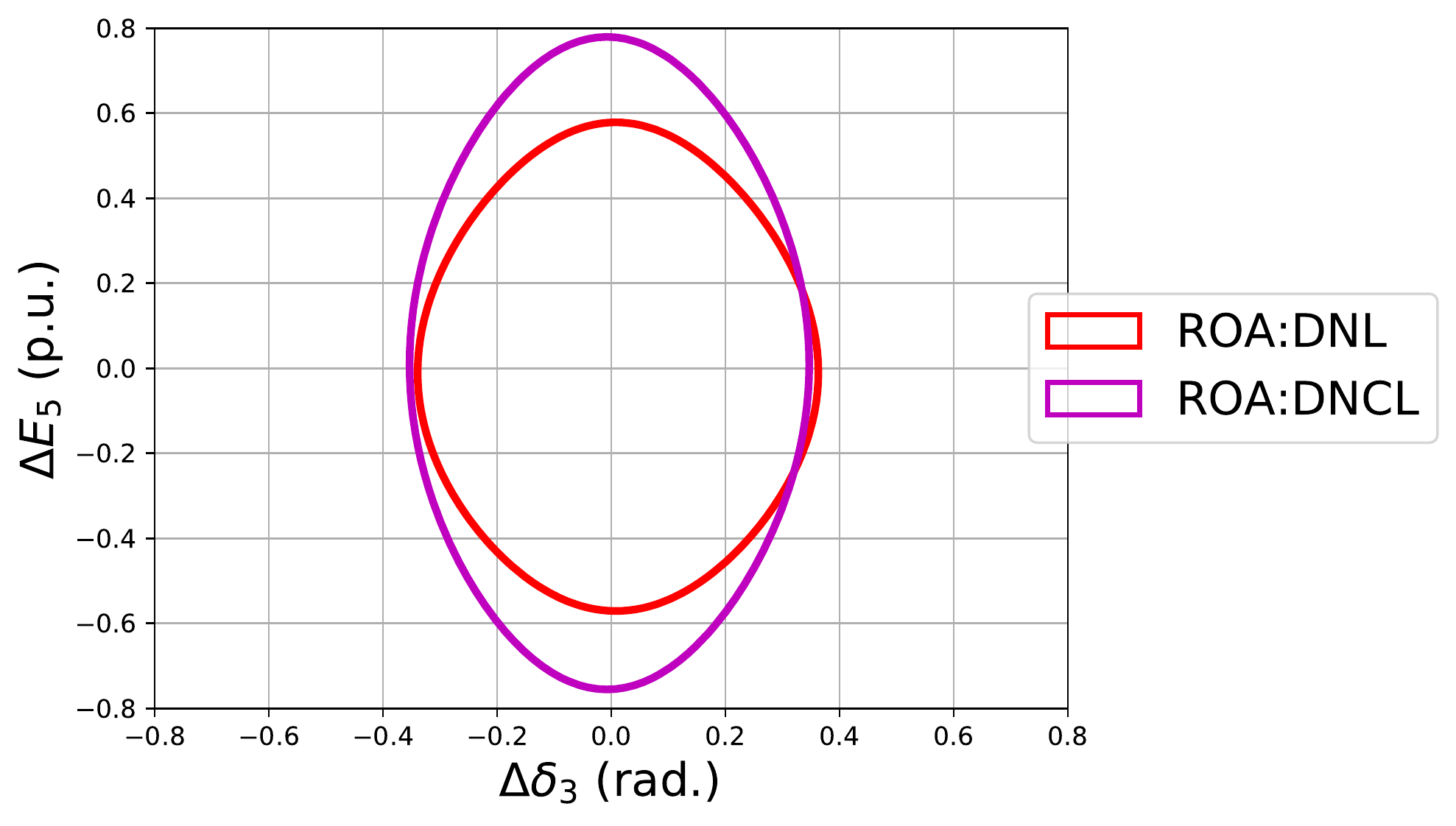}
       \caption{Comparison of the estimated region of attraction (RoA) from DNCL and DNL for the IEEE 123-node test feeder network. Projections on $\Delta\delta_3$-$\Delta E_5$ plane (with $\Delta\delta_1$ = $\Delta\delta_2$ = $\Delta\delta_4$ = $\Delta\delta_5$ = $\Delta E_1$ = $\Delta E_2$ = $\Delta E_3$ = $\Delta E_4$ = 0) are shown.}
    \label{fig:5mg:ROA_comp}
\end{figure}

Fig.~\ref{fig:5mg:ROA_comp} shows the RoAs estimated using our DNCL and DNL functions. Similar to the previous case studies, the RoA for DNCL is larger than that of DNL due to the secondary controller. We note that due to the large scale nature of the system, the RoAs estimated using the classical  QCL and QL approaches are significantly smaller (and barely visible) compared to that of DNCL and DNL, and hence we have omitted them from the figure. Moreover, again due to the large scale nature of the system, the SMT-based verification for the CNCL and CNL failed to converge even after a long time. Since  CNCL and CNL functions learning did not converge, we were also not able to estimate any RoAs based on these functions, and they are omitted from the figure. 
We note that this further emphasizes the computational benefit of our \textit{distributed}  approach, which converges fast even with SMT-based verification. Table \ref{table:compute_time_comp} shows the training time comparsion.



\begin{table}
\caption{Comparison of the training time for centralized (CNCL, CNL) and decentralized (DNCL, DNL) learning approaches.}
\label{table:compute_time_comp}
\vspace{-1em}
\begin{center}
\begin{tabular}{c|c|c|c|c} 
 \hline
   & \makecell{DNL\\(sec.)} & \makecell{DNCL\\(sec.)} & \makecell{CNL\\(sec.)}  & \makecell{CNCL\\(sec.)} \\
 \hline
 \makecell{IEEE \\123-node network} & 1679 & 1383.42 & $--$  & $--$ \\
   \hline
\end{tabular}
\end{center}
\vspace{-1em}
\end{table}

\subsection{Advantages Due to the Secondary Controller}

We observe that the learned secondary controller in the DNCL provides faster stabilization and robustness to perturbation, in addition to the larger RoA estimate, when compared to DNL where there is no secondary controller. The larger RoA of DNCL is clear from Fig.~\ref{fig:3mg:ROA_comp}, \ref{fig:4mg:ROA_comp}, and \ref{fig:5mg:ROA_comp}. The faster convergence is demonstrated through a time domain simulation in Fig.~\ref{fig:5mg:traj}. The increased robustness due to the secondary controller is demonstrated in Fig. \ref{fig:4mg:traj_noisy}. We add a small perturbation noise to the system state in the time domain simulation. From the figure, it is clear that the secondary controller is able to quickly suppress the disturbance.

\begin{figure}
    \centering
    \includegraphics[width =\linewidth]{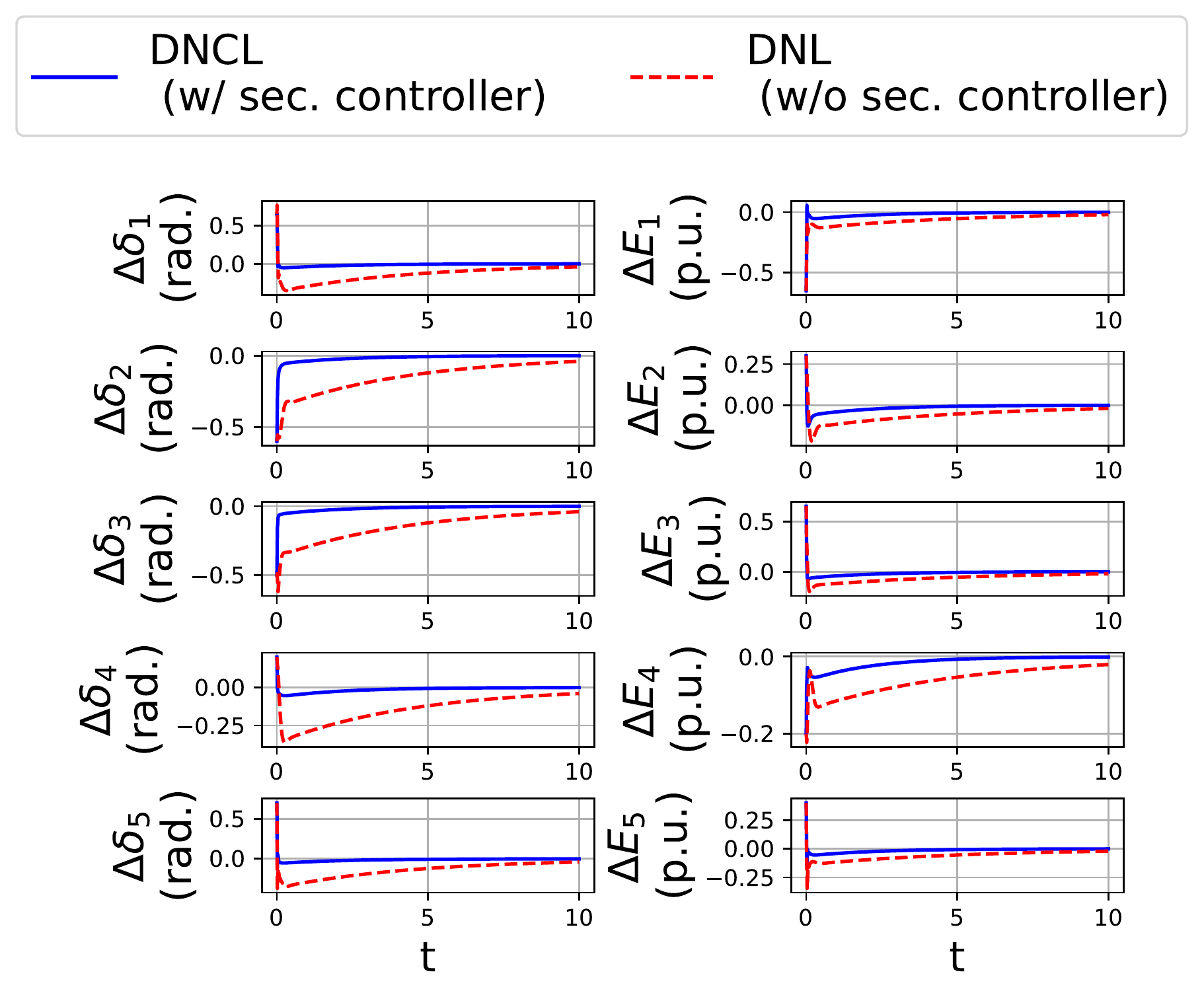}
    \caption{Time domain simulation comparison for the 123-node feeder  with initial state $x(0) = [0.65, -0.65, -0.6, 0.3, -0.5, 0.65, 0.2, -0.2, 0.7, 0.4]^\top$. The secondary controller (DNCL) stabilizes the system faster than DNL.}
    \label{fig:5mg:traj}
\end{figure}

\begin{figure}
    \centering
    \includegraphics[width =\linewidth]{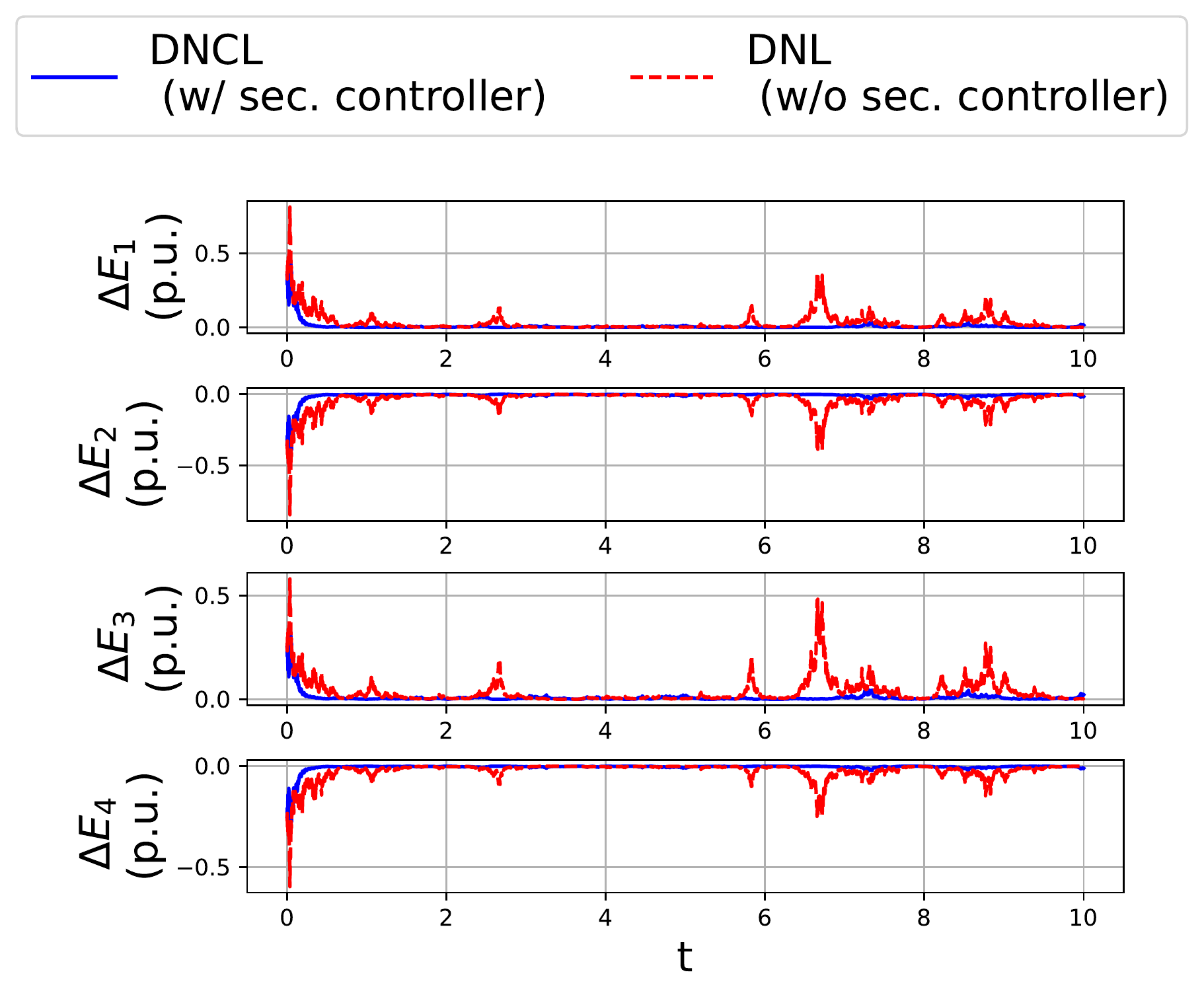}
    \caption{Time domain simulation comparison for four microgrid network  with  initial state $x(0) = [0.35, -0.35, 0.25, -0.25]^\top$. The secondary controller (DNCL) is able to suppress the disturbance resulting in a smoother trajectory.}
    \label{fig:4mg:traj_noisy}
\end{figure}

\section{Conclusion}
\label{sec:conclusion}

This paper proposes a novel distributed learning based framework for assessing Lyapunov stability of a class of networked nonlinear systems, where each subsystem is dissipative. The objective of the proposed framework is to construct a Lyapunov function in a distributed manner and to estimate the associated region of attraction for the networked system. We begin by leveraging a neural network function approximation to learn a storage function for each subsystem such that a local dissipativity property is satisfied by each subsystem. We next use a satisfiability modulo theories (SMT) solver based falsifier that verifies the local dissipativity of each subsystem by determining an absence of counterexamples that violate the local dissipativity property, as established by the neural network approximation. Finally, we verify network level stability by using an alternating direction method of multipliers (ADMM) approach to update the storage function of each subsystem in a distributed manner until a global stability condition for the network of dissipative subsystems is satisfied. We demonstrate the performance of the proposed algorithm and its advantages on three different case studies in microgrid networks. We note that this paper focuses on the synthesis of linear secondary controllers. Future work will focus on learning reinforcement learning based non-linear controllers that would further expand the RoA of the system and enhance its stability margins.

\bibliographystyle{IEEEtran}
\bibliography{22-Journal-Neural-Lyapunov}

\begin{IEEEbiography}
{Amit Jena} received the Integrated M.S. degree in mathematics from National Institute of Technology, Rourkela, India, M.S. degree in electrical engineering from Iowa State University, IA, USA. He is currently a Ph.D. student in electrical engineering at Texas A\& M University, college station, TX, USA. His research interests includes reinforcement learning, meta learning, with application in power systems.  
\end{IEEEbiography}

\begin{IEEEbiography}
{Tong Huang} (Member, IEEE) received the Ph.D. degree in electrical engineering from Texas A\& M University, College Station, TX, USA, in 2021, where he is currently a Postdoctoral Researcher. From September 2018 to December 2018, he was a visiting Ph.D. student with the Laboratory for Information and Decision Systems, MIT. His industry experience includes an internship with ISO-New England in 2018 and with Mitsubishi Electric Research Laboratories in 2019. He received the Best Paper Award at the 2020 IEEE Power and Energy Society General Meeting, the Best Paper Award at the 54-th Hawaii International Conference on System Sciences, the Thomas W. Powell’62 and Powell Industries Inc., Fellowship, and the Texas A\& M Graduate Teaching Fellowship.
\end{IEEEbiography}

\begin{IEEEbiography}
{S. Sivaranjani} received the B.E. degree from PES Institute of Technology, Bengaluru, India, M.S. degree from the Indian Institute of Science, Bengaluru, India, and Ph.D. degree from the University of Notre Dame, IN, USA, respectively, all in electrical engineering. She is currently an Assistant Professor in the School of Industrial Engineering at Purdue University, West Lafayette, IN. Previously, she was Postdoctoral Researcher with the Department of Electrical and Computer Engineering, Texas A\& M University, College Station, TX, USA. Her research interests include data-driven and distributed control for large-scale networked systems, with applications to energy systems and transportation networks.
\end{IEEEbiography}

\begin{IEEEbiography}
{Dileep Kalathil} (Senior Member, IEEE) received his Ph.D. degree from the University of Southern California (USC) in 2014. From 2014 to 2017, he was a Postdoctoral Researcher with the Department of Electrical Engineering and Computer Sciences, University of California at Berkeley. He is currently an Assistant Professor with the Department of Electrical and Computer Engineering, Texas A\&M University. His main research focus is on reinforcement learning theory and algorithms, with applications in energy systems, communication networks and mobile robotics. He was a recipient of the Best Academic Performance from the EE Department, IIT Madras and the Best Ph.D. Dissertation Prize in the USC Department of Electrical Engineering, NSF CRII Award in 2019 and NSF CAREER award in 2021. 
\end{IEEEbiography}

\begin{IEEEbiography}
{Le Xie} (Fellow, IEEE) received the B.E. degree in electrical engineering from Tsinghua University, Beijing, China, in 2004, the M.S. degree in engineering sciences from Harvard University, Cambridge, MA, USA, in 2005, and the Ph.D. degree from the Department of Electrical and Computer Engineering, Carnegie Mellon University, Pittsburgh, PA, USA, in 2009. He is currently a Professor with the Department of Electrical and Computer Engineering, Texas A\& M University, College Station, TX, USA. His research interests include modeling and control of large-scale complex systems, smart grids application with renewable energy resources, and electricity markets.
\end{IEEEbiography}

\vfill

\end{document}